\title{Timed Tuplix Calculus and\\
       the Wesseling and van den Bergh Equation%
       \thanks{This research was carried out in the framework of
               the  Jacquard-project Symbiosis, which is funded by the
               Netherlands Organisation for Scientific Research (NWO).}}
\author{J.A. Bergstra \and C.A. Middelburg}
\institute{Informatics Institute, Faculty of Science,
           University of Amsterdam \\
           Science Park~904, 1098~XH Amsterdam, the Netherlands \\
           \email{J.A.Bergstra@uva.nl,C.A.Middelburg@uva.nl}}
\begin{document}

\maketitle

\begin{abstract}
% 102 %
We develop an algebraic framework for the description and analysis of 
financial behaviours, that is, behaviours that consist of transferring 
certain amounts of money at planned times.
To a large extent, analysis of financial products amounts to analysis of 
such behaviours.
We formalize the cumulative interest compliant conservation requirement
for financial products proposed by Wesseling and van den Bergh by an 
equation in the framework developed and define a notion of financial 
product behaviour using this formalization.
We also present some properties of financial product behaviours.
The development of the framework has been influenced by previous work on
the process algebra ACP.
\begin{keywords}
timed tuplix calculus, realistic interest calculation axiom,
Wesseling and van den Bergh equation, financial product behaviour,
signed cancellation meadow.
\end{keywords}%
% \begin{classcode}
% % MSC 2000: ...
% \end{classcode}
\end{abstract}

\section{Introduction}
\label{sect-introduction}

Analysis of financial products amounts to a large extent to analysis of
behaviours that consist of transferring certain amounts of money at 
planned times.
In this paper, such behaviours are called financial behaviours.
Mathematically precise analysis of financial products is complicated by 
the lack of a specialized mathematical framework for the description and 
analysis of financial behaviours.
The main objective of the work presented in this paper is to devise such 
a framework.
We aim at an algebraic framework, that is, a framework in which 
operators enable us to describe a financial behaviour as a behaviour 
composed of several other financial behaviours and equational axioms
enable us to analyze a described financial behaviour by mere algebraic
calculations.
Our intuitive understanding of the nature of financial behaviours will
provide the primary justification of the equations that are taken as 
axioms.

To achieve our main objective, we develop an extension of the core of 
tuplix calculus that can deal with the timing of transfers involved in 
financial behaviours.
Tuplix calculus was presented for the first time in~\cite{BPZ07a} and
has among other things been applied in modular financial budget design.
The extension of the core of tuplix calculus developed in this paper is 
called timed tuplix calculus.
The operators added to the core of tuplix calculus in this extension are
comparable to operators introduced earlier in the setting of the process
algebra ACP~\cite{BK84b}.
In the core of tuplix calculus as well as timed tuplix calculus, the 
mathematical structure for quantities is a signed cancellation 
meadow~\cite{BBP13a}.
The prime examples of cancellation meadows are the fields of rational
and real numbers with the multiplicative inverse operation made total by 
imposing that the multiplicative inverse of zero is zero.
A cancellation meadow is an appropriate mathematical structure for
quantities.
A signed cancellation meadow is a cancellation meadow expanded with a
signum operation.

In~\cite{WB00a}, Wesseling and van den Bergh formulate a cumulative 
interest compliant conservation requirement for financial products: 
the sum of all transfers relating to the product, transposed to some 
point of time (the focal date) by means of cumulative interest at the 
effective interest rate of the product, is zero.
As an example of the use of timed tuplix calculus, we formalize this 
conservation requirement by an equation in timed tuplix calculus.
Unaware of previous occurrences of the requirement in the financial
literature, we call this equation the Wesseling and van den Bergh
equation.
Using this equation, we define a notion of financial product behaviour.
A financial product behaviour can be seen as a financial behaviour for 
which a financial product can be devised that involves that behaviour.

In addition to that, we adapt the notion of implicit capital of a 
process introduced in~\cite{BM06e} to the current setting.
The implicit capital associated with a financial behaviour can be seen
as the least amount of money that must be at disposal initially to 
exhibit that behaviour, taking cumulative interest into account.
We use this notion to show that financial behaviours may profit from 
using some financial product.
We also present some other properties of financial product behaviours.

This paper is organized as follows.
First, we give a brief summary of signed cancellation meadows
(Section~\ref{sect-canc-meadows}).
Next, we review the core of tuplix calculus (Section~\ref{sect-CTC}).
Then, we extend the core of tuplix calculus to timed tuplix calculus
(Section~\ref{sect-TTC}).
After that, we formalize the conservation requirement for financial
products, define a notion of financial product behaviour, and present 
some properties of financial product behaviours 
(Section~\ref{sect-fpb}).
Following this, we construct the standard model of the timed tuplix
calculus (Section~\ref{sect-model}).
Finally, we make some concluding remarks
(Section~\ref{sect-conclusions}).

\section{Signed Cancellation Meadows}
\label{sect-canc-meadows}

In the timed tuplix calculus presented in this paper, the mathematical
structure for quantities is a signed cancellation meadow.
In this section, we give a brief summary of signed cancellation meadows.

A meadow is a field with the multiplicative inverse operation made total
by imposing that the multiplicative inverse of zero is zero.
A cancellation meadow is a meadow in which the multiplicative inverse
operation satisfies the general inverse law (given below).
A signed meadow is a meadow expanded with a signum operation.
Meadows were defined for the first time in~\cite{BT07a} and elaborated
in several subsequent papers.
The expansion of meadows with a signum operation originates
from~\cite{BBP13a}.
In the latter paper, references are made to the key papers on meadows.

The signature of meadows consists of the following constants and
operators:
\begin{itemize}
\item
the constants $0$ and $1$;
\item
the binary \emph{addition} operator ${} +$ {};
\item
the binary \emph{multiplication} operator ${} \mul {}$;
\item
the unary \emph{additive inverse} operator $- {}$;
\item
the unary \emph{multiplicative inverse} operator ${}\minv$.
\end{itemize}

We assume that there are infinitely many variables, including $u$, $v$
and $w$.
Terms are build as usual.
We use infix notation for the binary operators \mbox{${} + {}$} and
\mbox{${} \mul {}$}, prefix notation for the unary operator 
\mbox{$- {}$}, and postfix notation for the unary operator 
\mbox{${}\minv$}.
We use the usual precedence convention to reduce the need for
parentheses.
We introduce subtraction and division as abbreviations:
$p - q$ abbreviates $p + (-q)$ and
$p / q$ abbreviates $p \mul q\minv$.
We use numerals in the common way ($2$ abbreviates $1 + 1$, etc.).
We also use the notation $p^n$ for exponentiation with a natural number
as exponent.
For each term $p$ over the signature of meadows, the term $p^n$ is
defined by induction on $n$ as follows: $p^0 = 1$ and
$p^{n+1} = p^n \mul p$.

The constants and operators from the signature of meadows are adopted
from rational arithmetic, which gives an appropriate intuition about
these constants and operators.

A meadow is an algebra over the signature of meadows that satisfies the
equations given in Table~\ref{eqns-meadow}.%
\begin{table}[!t]
\caption{Equations for meadows}
\label{eqns-meadow}
\begin{eqntbl}
\begin{eqncol}
(u + v) + w = u + (v + w)                                             \\
u + v = v + u                                                         \\
u + 0 = u                                                             \\
u + (-u) = 0
\end{eqncol}
\qquad\quad
\begin{eqncol}
(u \mul v) \mul w = u \mul (v \mul w)                                 \\
u \mul v = v \mul u                                                   \\
u \mul 1 = u                                                          \\
u \mul (v + w) = u \mul v + u \mul w
\end{eqncol}
\qquad\quad
\begin{eqncol}
(u\minv)\minv = u                                                   \\
u \mul (u \mul u\minv) = u
\end{eqncol}
\end{eqntbl}
\end{table}
Thus, a meadow is a commutative ring with identity equipped with a
multiplicative inverse operation \mbox{${}\minv$} satisfying the 
reflexivity equation ${(u\minv)}\minv = u $ and the restricted inverse 
equation $u \mul (u \mul u\minv) = u$.
From the equations given in Table~\ref{eqns-meadow}, the equation
$0\minv = 0$ is derivable (see~\cite{BT07a}).

In meadows, the multiplicative inverse operation is total.
The advantage of working with a total multiplicative inverse operation
lies in the fact that conditions like $u \neq 0$ in
$u \neq 0 \Implies u \mul u\minv = 1$ are not needed to guarantee
meaning.

A \emph{non-trivial meadow} is a meadow that satisfies the
\emph{separation axiom}
\begin{ldispl}
0 \neq 1\;;
\end{ldispl}
and a \emph{cancellation meadow} is a meadow that satisfies the
\emph{cancellation axiom}
\begin{ldispl}
u \neq 0 \And u \mul v = u \mul w \Implies v = w\;,
\end{ldispl}
or equivalently, the \emph{general inverse law}
\begin{ldispl}
u \neq 0 \Implies u \mul u\minv = 1\;.
\end{ldispl}
Important properties of non-trivial cancellation meadows are 
$u / u = 0 \Iff u = 0$ and $u / u = 1 \Iff u \neq 0$.

A \emph{signed meadow} is a meadow expanded with a unary \emph{signum}
operation $\sign$ satisfying the equations given in
Table~\ref{eqns-signum}.%
\begin{table}[!t]
\caption{Equations for signum operation}
\label{eqns-signum}
\begin{eqntbl}
\begin{eqncol}
\sign(u / u) = u / u                                                  \\
\sign(1 - u / u) = 1 - u / u                                          \\
\sign(-1) = -1
\end{eqncol}
\qquad\quad
\begin{eqncol}
\sign(u\minv) = \sign(u)                                              \\
\sign(u \mul v) = \sign(u) \mul \sign(v)                              \\
(1 - \frac{\sign(u) - \sign(v)}{\sign(u) - \sign(v)}) \mul
(\sign(u + v) - \sign(u)) = 0
\end{eqncol}
\end{eqntbl}
\end{table}
In combination with the cancellation axiom, the last equation in this
table is equivalent to the conditional equation
$\sign(u) = \sign(v) \Implies \sign(u + v) = \sign(u)$.

In signed cancellation meadows, the function $\max$ is defined as 
follows:
\begin{ldispl}
\max(u,v) =
{\displaystyle \frac{\sign(u - v) + 1}{2}} \mul (u - v) + v\;.
\end{ldispl}
We will write:
\begin{ldispl}
p > q    \;\;\mathrm{for}\;\;
{\displaystyle \frac{1 - \sign(p - q)}{1 - \sign(p - q)}} = 0\;, 
\qquad
p \leq q \;\;\mathrm{for}\;\;
{\displaystyle \frac{1 - \sign(p - q)}{1 - \sign(p - q)}} = 1\;.
\end{ldispl}

\section{Core Tuplix Calculus and Encapsulation}
\label{sect-CTC}

The timed tuplix calculus presented in this paper extends \CTC\
(Core Tuplix Calculus).
\CTC\ has been introduced in~\cite{BPZ07a} as the core of \TC\
(Tuplix Calculus).
In this section, we give a brief summary of \CTC\ and its extension
with encapsulation operators.
These operators have been introduced in~\cite{BPZ07a} as well.
The operators of the timed tuplix calculus that will be introduced in
Section~\ref{sect-TTC} include generalizations of the encapsulation
operators.

It is assumed that a fixed but arbitrary set $A$ of
\emph{transfer actions} has been given.
It is also assumed that a fixed but arbitrary signed non-trivial
cancellation meadow $\cD$ has been given.

\CTC\ has two sort: the sort $\Tuplix$ of \emph{tuplices} and the sort
$\Quantity$ of \emph{quantities}.
To build terms of sort $\Tuplix$, it has the following constants and
operators:
\begin{itemize}
\item
the \emph{empty tuplix} constant $\const{\emptx}{\Tuplix}$;
\item
the \emph{blocking tuplix} constant $\const{\nullx}{\Tuplix}$;
\item
for each $a \in A$, the unary \emph{transfer action} operator
$\funct{a}{\Quantity}{\Tuplix}$;
\item
the unary \emph{zero test} operator
$\funct{\gamma}{\Quantity}{\Tuplix}$;
\item
the binary \emph{conjunctive composition} operator
$\funct{\conjc}{\Tuplix \x \Tuplix}{\Tuplix}$.
\end{itemize}
To build terms of sort $\Quantity$, \CTC\ has the constants and
operators from the signature of meadows.

We assume that there are infinitely many variables of sort $\Tuplix$,
including $x$, $y$ and $z$, and infinitely many variables of sort
$\Quantity$, including $u$, $v$ and $w$.
Terms are build as usual for a many-sorted signature
(see e.g.~\cite{ST99a,Wir90a}).
We use infix notation for the binary operator $\conjc$.

A term of sort $\Tuplix$ is \emph{tuplix-closed} if it does not contain
variables of sort $\Tuplix$. \linebreak[2]
A term of sort $\Tuplix$ is \emph{closed} if it does not contain
variables of any sort.

We look at \CTC\ as a calculus that is concerned with transfers of
quantities of something.
Let $t$ and $t'$ be closed terms of sort $\Tuplix$, and
let $q$ be a closed term of sort $\Quantity$.
Intuitively, the constants and operators introduced above
can be explained as follows:
\begin{itemize}
\item
$\emptx$ is a tuplix with no effect;
\item
$\nullx$ blocks any joint effect of tuplices;
\item
the effect of $a(q)$ is performing action $a$ and transferring quantity
$q$ on performing that action;
\item
$\ztest{q}$ is a tuplix with no effect if $q$ equals $0$ and blocks any 
joint effect otherwise;
\item
the effect of $t \conjc t'$ is the joint effect of $t$ and $t'$.
\end{itemize}
In~\cite{BPZ07a}, these constants and operators are explained in a
different way.
We consider that way of explanation less appropriate for the timed
extension of \CTC\ that will be presented in Section~\ref{sect-TTC}.

We use the following convention: a transfer of a positive quantity is
taken as an outgoing transfer and a transfer of a negative quantity is
taken as an incoming transfer.

Notice that \CTC\ can be looked upon as a special purpose process
algebra in which processes are considered at a level of detail where not
even the order in which actions are performed matter.
This makes \CTC\ suitable for formalizing budgets: budgets are in fact
descriptions of financial behaviour at the level of detail where only
the actions to be performed and the quantities transferred on performing
those actions matter.

The axioms of \CTC\ are given in Table~\ref{axioms-CTC}.%
\begin{table}[!t]
\caption{Axioms of \CTC}
\label{axioms-CTC}
\begin{eqntbl}
\begin{axcol}
x \conjc y = y \conjc x                                 & \axiom{T1} \\
(x \conjc y) \conjc z = x \conjc (y \conjc z)           & \axiom{T2} \\
x \conjc \emptx = x                                     & \axiom{T3} \\
x \conjc \nullx  = \nullx                               & \axiom{T4} \\
a(u) \conjc a(v) = a(u + v)                             & \axiom{T5}
\end{axcol}
\qquad
\begin{axcol}
\ztest{u} = \ztest{u / u}                               & \axiom{T6} \\
\ztest{0} = \emptx                                      & \axiom{T7} \\
\ztest{1} = \nullx                                      & \axiom{T8} \\
\ztest{u} \conjc \ztest{v} = \ztest{u / u + v / v}      & \axiom{T9} \\
\ztest{u - v} \conjc a(u) = \ztest{u - v} \conjc a(v)   & \axiom{T10}
\end{axcol}
\end{eqntbl}
\end{table}
The following proof rule is adopted to lift the valid equations between
terms of sort $\Quantity$ to \CTC:
\begin{ldispl}
\mbox{for all terms $p$ and $q$ of sort $\Quantity$,} \quad
\cD \models p = q \quad \mathrm{implies} \quad \ztest{p} = \ztest{q}\;.
\end{ldispl}
We will refer to this proof rule by DE.

To prove a statement for all \CTC\ terms of sort $\Tuplix$, it is is
sufficient to prove it for all \CTC\ canonical terms.
A \emph{CTC canonical term} is a \CTC\ term of sort $\Tuplix$ of the
form
\begin{ldispl}
\ztest{p_0} \conjc a_1(p_1) \conjc \ldots \conjc a_k(p_k) \conjc
x_1 \conjc \ldots \conjc x_l\;,
\end{ldispl}
where $k,l \geq 0$ and $a_1,\ldots,a_k$ are distinct transfer actions.
\begin{lemma}
\label{lemma-CTC-can-term}
For all \CTC\ terms $t$ of sort $\Tuplix$, there exists a \CTC\
canonical term $t'$ such that $t = t'$ is derivable from the axioms of
\CTC.
\end{lemma}
\begin{proof}
This proposition is a reformulation of Lemma~1 from~\cite{BPZ07a}.
\qed
\end{proof}

Like in~\cite{BPZ07a}, we can add the following operators to the
operators of \CTC\ to build terms of sort $\Tuplix$:
\begin{itemize}
\item
for each $H \subseteq A$, the unary \emph{encapsulation} operator
$\funct{\encap{H}}{\Tuplix}{\Tuplix}$.
\end{itemize}

Let $t$ be a closed term of sort $\Tuplix$.
Intuitively, the encapsulation operators can be explained as follows:
\begin{itemize}
\item
if, for each $a \in H$, the sum of all quantities transferred by $t$ on
performing $a$ equals $0$, then $\encap{H}(t)$ differs from $t$ in that,
for each $a \in H$, the effect of all transfer actions of the form
$a(p)$ occurring in $t$ is eliminated; otherwise, $\encap{H}(t)$ has the
same effect as $\nullx$.
\end{itemize}
The name encapsulation was introduced earlier in the setting of the
process algebra \ACP\ for similar operations in~\cite{BK84b}.

The axioms for encapsulation are given in
Table~\ref{axioms-encap-pabstr}.%
\begin{table}[!t]
\caption{Axioms for encapsulation}
\label{axioms-encap-pabstr}
\begin{eqntbl}
\begin{axcol}
\encap{H}(\emptx) = \emptx                               & \axiom{E1} \\
\encap{H}(\nullx) = \nullx                               & \axiom{E2} \\
\encap{H}(\ztest{u}) = \ztest{u}                         & \axiom{E3}
\end{axcol}
\quad
\begin{axcol}
\encap{H}(a(u)) = a(u)      \hfill \mif a \notin H       & \axiom{E4} \\
\encap{H}(a(u)) = \ztest{u} \hfill \mif a \in H          & \axiom{E5} \\
\encap{H}(x \conjc \encap{H}(y)) = \encap{H}(x) \conjc \encap{H}(y)
                                                         & \axiom{E6} \\
\encap{H \union H'}(x) = \encap{H}(\encap{H'}(x))        & \axiom{E7}
\end{axcol}
\end{eqntbl}
\end{table}

\section{Timed Tuplix Calculus}
\label{sect-TTC}

In this section, we extend \CTC\ to \TTC\ (Timed Tuplix Calculus).
In the informal explanation of the constants and operators of \CTC\
given in Section~\ref{sect-CTC}, we could disregard what it is of which
quantities are transferred.
Clearly, if \CTC\ is used to formalize budgets, quantities of money are
transferred.
It happens to be far from obvious to give informal explanations of two
of the additional operators of \TTC\ that are not couched in terms of
quantities of money, usually called amounts of money.
Therefore, we change over in this section to explanations couched in
terms of amounts of money.
This should not be taken as a suggestion that more abstract explanations
are impossible.
In Section~\ref{sect-fpb}, tuplices are viewed as representations of
financial behaviours.
The change-over made in this section agrees with this viewpoint.

Like \CTC, \TTC\ has two sort: the sort $\Tuplix$ of tuplices and the
sort $\Quantity$ of quantities.
To build terms of sort $\Tuplix$, it has the constants and operators of
\CTC\ to build terms of sort $\Tuplix$, and in addition the following
operators:
\begin{itemize}
\item
the unary \emph{delay} operator $\funct{\delay}{\Tuplix}{\Tuplix}$;
\item
for each $I \subseteq A$, the unary \emph{pre-abstraction} operator
$\funct{\pabstr{I}}{\Tuplix}{\Tuplix}$;
\item
for each $H \subseteq A$, the binary
\emph{interest counting encapsulation} operator
$\funct{\iencap{H}{{}}}{\Quantity \x \Tuplix}{\Tuplix}$.
\end{itemize}
To build terms of sort $\Quantity$, it has the constants and operators
from the signature of meadows, and in addition the following operator:
\begin{itemize}
\item
the binary \emph{implicit capital} operator
$\funct{\icap{{}}}{\Quantity \x \Tuplix}{\Quantity}$.
\end{itemize}

We write $\iencap{H}{p}(t)$ and $\icap{p}(t)$, where $p$ is a term of
sort $\Quantity$ and $t$ is a term of sort~$\Tuplix$, for
$\iencap{H}{{}}(p,t)$ and $\icap{{}}(p,t)$, respectively.
We also use the notation $\delay^n(t)$.
For each term $t$ of sort $\Tuplix$, the term $\delay^n(t)$ is defined
by induction on $n$ as follows:
$\delay^0(t) = t$ and $\delay^{n+1}(t) = \delay(\delay^n(t))$.

In \TTC, it is assumed that $\iact \in A$.
A special role is assigned to $\iact$: transfer actions of the form
$a(p)$ are renamed to $\iact(p)$ on pre-abstraction in order to abstract
from their identity, but not from their presence.

We look at \TTC\ as a calculus that is concerned with transfers of 
amounts of money on time.
Let $t$ be a closed term of sort $\Tuplix$ and
let $p$ be a closed term of sort $\Quantity$.
Intuitively, the additional operators introduced above can be explained
as follows:
\begin{itemize}
\item
$\delay(t)$ differs from $t$ in that the effect of each transfer action
occurring in $t$ is delayed one time slice;
\item
$\pabstr{I}(t)$ differs from $t$ in that, for each $a \in I$, the effect
of each transfer action of the form $a(p)$ occurring in $t$ is replaced
by the effect of $\iact(p)$;
\item
$\iencap{H}{p}(t)$ differs from $\encap{H}(t)$ in that, for each
$a \in H$, a cumulative interest at the rate of $p$ per time slice is
taken into account on the summation of all amounts of money transferred
by $t$ on performing $a$;
\item
$\icap{p}(t)$ is the least amount of money that must be at disposal 
initially to allow for each transfer action occurring in $t$ to be
performed if a cumulative interest at the rate of $p$ per time slice is 
taken into account.
\end{itemize}
The delay operator introduced here is comparable to the relative
discrete time unit delay operator and the absolute discrete time unit
delay operator introduced earlier in the setting of the process algebra
\ACP\ in~\cite{BB95a}.
The pre-abstraction operators introduced here are comparable to the
pre-abstraction operators introduced earlier in the setting of the
process algebra \ACP\ in~\cite{BB88}.
The interest counting encapsulation operators are generalizations of the
encapsulation operators introduced in Section~\ref{sect-CTC}:
$\encap{H}(t)$ can be taken as abbreviation of $\iencap{H}{0}(t)$.
The implicit capital operator introduced here is comparable to the
implicit computational capital operator introduced earlier in the
setting of the process algebra \ACP\ in~\cite{BM06e}.

The implicit capital of a non-blocking tuplix is an amount of money that 
is non-negative, and the implicit capital of a blocking tuplix is 
undefined.
In order to circumvent the use of algebras with partial operations, $-1$
is used to represent the undefinedness of the implicit capital of a
blocking tuplix.

Notice that \TTC\ can be looked upon as a special purpose timed process
algebra in which processes are considered at a level of detail where the
time slices in which actions are performed matter, but not their order
within the time slices.
This makes \TTC\ suitable for analyzing financial products: financial
products involve transfers of amounts of money where the day, week or 
month in which actions are performed and the amounts of money that are 
transferred in doing so are relevant, but not their order within the 
periods concerned.

The axioms of \TTC\ are the axioms of \CTC\ and the additional axioms
given in Tables~\ref{axioms-delay-pabstr-iencap}
and~\ref{axioms-implicit-capital}.%
\begin{table}[!t]
\caption{Axioms for delay, pre-abstraction and interest counting
 encapsulation}
\label{axioms-delay-pabstr-iencap}
\begin{eqntbl}
\begin{axcol}
\delay(\emptx) = \emptx                                & \axiom{D1} \\
\delay(\nullx) = \nullx                                & \axiom{D2} \\
\delay(\ztest{u}) = \ztest{u}                          & \axiom{D3} \\
\delay(x \conjc y) = \delay(x) \conjc \delay(y)        & \axiom{D4} \\
{} \\
\pabstr{I}(\emptx) = \emptx                            & \axiom{PA1} \\
\pabstr{I}(\nullx) = \nullx                            & \axiom{PA2} \\
\pabstr{I}(\ztest{u}) = \ztest{u}                      & \axiom{PA3} \\
\pabstr{I}(a(u)) = a(u)        \hfill \mif a \notin I  & \axiom{PA4} \\
\pabstr{I}(a(u)) = \iact(u)    \hfill \mif a \in I     & \axiom{PA5} \\
\pabstr{I}(x \conjc y) = \pabstr{I}(x) \conjc \pabstr{I}(y)
                                                       & \axiom{PA6} \\
\pabstr{I}(\delay(x)) = \delay(\pabstr{I}(x))          & \axiom{PA7} \\
\pabstr{I \union I'}(x) = \pabstr{I}(\pabstr{I'}(x))   & \axiom{PA8}
\end{axcol}
\quad
\begin{axcol}
{} \\
{} \\
\ztest{1 - \frac{1+u}{1+u}} \conjc
\iencap{\set{a}}{u}(a(v) \conjc x) =
{} \\ \quad
\ztest{1 - \frac{1+u}{1+u}} \conjc
\iencap{\set{a}}{u}(\delay(a((1+u) \mul v)) \conjc x)  & \axiom{RICA} \\
{} \\
\iencap{H}{u}(\emptx) = \emptx                         & \axiom{ICE1} \\
\iencap{H}{u}(\nullx) = \nullx                         & \axiom{ICE2} \\
\iencap{H}{u}(\ztest{v}) = \ztest{v}                   & \axiom{ICE3} \\
\iencap{H}{u}(a(v)) = a(v)      \hfill \mif a \notin H & \axiom{ICE4} \\
\iencap{H}{u}(a(v)) = \ztest{v} \hfill \mif a \in H    & \axiom{ICE5} \\
\iencap{H}{u}(x \conjc \iencap{H}{u}(y)) =
\iencap{H}{u}(x) \conjc \iencap{H}{u}(y)               & \axiom{ICE6} \\
\iencap{H}{u}(\delay(x)) = \delay(\iencap{H}{u}(x))    & \axiom{ICE7} \\
\iencap{H \union H'}{u}(x) = \iencap{H}{u}(\iencap{H'}{u}(x))
                                                       & \axiom{ICE8}
\end{axcol}
\end{eqntbl}
\end{table}
\begin{table}[!t]
\caption{Axioms for implicit capital}
\label{axioms-implicit-capital}
\begin{eqntbl}
\begin{axcol}
\icap{u}(x) = \icap{u}(\pabstr{A}(x))                  & \axiom{IC1} \\
\icap{u}(\emptx) = 0                                   & \axiom{IC2} \\
\icap{u}(\nullx) = -1                                  & \axiom{IC3} \\
\icap{u}(\iact(v)) = \max(v,0)                         & \axiom{IC4} \\
\frac{1 + \icap{u}(x)}{1 + \icap{u}(x)} \mul \icap{u}(\delay(x)) =
\frac{1 + \icap{u}(x)}{1 + \icap{u}(x)} \mul
\max(\frac{1}{1 + u} \mul \icap{u}(x),0)               & \axiom{IC5} \\
\frac{1 + \icap{u}(x)}{1 + \icap{u}(x)} \mul
\icap{u}(\iact(v) \conjc \delay(x)) =
\frac{1 + \icap{u}(x)}{1 + \icap{u}(x)} \mul
\max(v + \frac{1}{1 + u} \mul \icap{u}(x),0)           & \axiom{IC6}
\end{axcol}
\end{eqntbl}
\end{table}
Like in \CTC, the proof rule DE is adopted to lift the valid equations
between terms of sort $\Quantity$ to \TTC.

Axiom RICA (Realistic Interest Calculation Axiom) is equivalent to
\begin{ldispl}
u \neq -1 \Implies
\iencap{\set{a}}{u}(a(v) \conjc x) =
\iencap{\set{a}}{u}(\delay(a((1+u) \mul v)) \conjc x)\;.
\end{ldispl}
This formula can be paraphrased as follows: when encapsulating $a$, 
reckoning with an interest rate $u$ different from $-1$, an undelayed 
transfer of an amount $v$ is equivalent to a transfer of an amount
$(u + 1) \mul v$ in the next time slice.
The exclusion of $u = -1$ prevents that the equation $x = \nullx$ can be
derived.
Axioms IC5 and IC6 are equivalent to
\begin{ldispl}
\icap{u}(x) \neq -1 \Implies
\icap{u}(\delay(x)) =
\max(\frac{1}{1 + u} \mul \icap{u}(x),0)\;,
\eqnsep
\icap{u}(x) \neq -1 \Implies
\icap{u}(\iact(v) \conjc \delay(x)) =
\max(v + \frac{1}{1 + u} \mul \icap{u}(x),0)\;.
\end{ldispl}
These formulas express that, reckoning with an interest rate $u$, the 
total contribution of all transfers in the next time slice to the 
implicit capital equals $\frac{1}{1 + u}$ times what their total 
contribution would be in the current time slice.
The exclusion of $\icap{u}(x) = -1$ is needed because $-1$ is used to
represent undefinedness.

\begin{example}
\label{example-iencap}
Let $p$ be a closed term of sort $\Quantity$ such that
$\cD \models \frac{1+p}{1+p} = 1$.
The following is a derivation from the axioms of \TTC\ and the proof
rule DE:
\begin{ldispl}
\begin{aeqns}
   &   &
\iencap{\set{a}}{p}(a(u) \conjc \delay(a(5)) \conjc \delay^2(b(u - 7)))
\\ & = &
\iencap{\set{a}}{p}(a(u) \conjc a(\frac{5}{1+p}) \conjc
                    \delay^2(\iencap{\set{a}}{p}(b(u - 7))))
\\ & = &
\iencap{\set{a}}{p}(a(u + \frac{5}{1+p}) \conjc
                    \iencap{\set{a}}{p}(\delay^2(b(u - 7))))
\\ & = &
\iencap{\set{a}}{p}(a(u + \frac{5}{1+p})) \conjc
\iencap{\set{a}}{p}(\delay^2(b(u - 7)))
\\ & = &
\iencap{\set{a}}{p}(a(u + \frac{5}{1+p})) \conjc
\delay^2(\iencap{\set{a}}{p}(b(u - 7)))
\\ & = &
\ztest{u + \frac{5}{1+p}} \conjc \delay^2(b(u - 7))\;.
\end{aeqns}
\end{ldispl}
Because $\cD \models \frac{-5}{1+p} + \frac{5}{1+p} = 0$, it follows
immediately that
\begin{ldispl}
\iencap{\set{a}}{p}(a(\frac{-5}{1+p}) \conjc \delay(a(5)) \conjc
                    \delay^2(b(\frac{-5}{1+p} - 7))) =
\delay^2(b(\frac{-5}{1+p} - 7))\;.
\end{ldispl}
Moreover, it follows immediately that
\begin{ldispl}
\iencap{\set{a}}{p}(a(q) \conjc \delay(a(5)) \conjc
                    \delay^2(b(q - 7))) = \nullx
\end{ldispl}
for all closed terms $q$ of sort
$\Quantity$ such that not $\cD \models q + \frac{5}{1+p} = 0$.
\end{example}

\begin{example}
\label{example-icap}
Let $p$ and $q$ be closed terms of sort $\Quantity$.
The following is a derivation from the axioms of \TTC\ and the proof
rule DE:
\pagebreak[2]
\begin{ldispl}
\begin{aeqns}
   &   &
\icap{p}
 (a(7) \conjc \delay(a'(-8)) \conjc
  b(-5) \conjc \delay^2(b'((1+q)^2 \mul 5)))
\\ & = &
\icap{p}
 (\iact(7) \conjc \delay(\iact(-8)) \conjc
  \iact(-5) \conjc \delay^2(\iact((1+q)^2 \mul 5)))
\\ & = &
\icap{p}
 (\iact(2) \conjc
  \delay(\iact(-8) \conjc \delay(\iact((1+q)^2 \mul 5))))
\\ & = &
\max(2 + \frac{1}{1+p} \mul
     \icap{p}(\iact(-8) \conjc \delay(\iact((1+q)^2 \mul 5))),
     0)
\\ & = &
\max(2 + \frac{1}{1+p} \mul
         \max(-8 + \frac{1}{1+p} \mul
                   \icap{p}(\iact((1+q)^2 \mul 5)),0),0)
\\ & = &
\max(2 +
     \frac{1}{1+p} \mul
     \max(-8 + \frac{1}{1+p} \mul (1+q)^2 \mul 5,0),0)\;.
\end{aeqns}
\end{ldispl}
It follows immediately that
\begin{ldispl}
\icap{p}
 (a(7) \conjc \delay(a'(-8)) \conjc
  b(-5) \conjc \delay^2(b'((1+q)^2 \mul 5))) = 2
\end{ldispl}
for all closed terms $p$ and $q$ of sort $\Quantity$ such that
$\cD \models \frac{1}{1+p} \mul (1+q)^2 \leq \frac{8}{5}$.
There are many such $p$ and $q$, for example, $p$ and $q$ such that
$\cD \models p = \frac{1}{100}$ and $\cD \models q = \frac{10}{100}$,
but also $p$ and $q$ such that $\cD \models p = \frac{25}{100}$ and
$\cD \models q = \frac{40}{100}$. \linebreak[2]
We will return to this example in Section~\ref{sect-fpb}.
\end{example}

To prove a statement for all tuplix-closed \TTC\ terms of sort
$\Tuplix$, it is sufficient to prove it for all tuplix-closed \TTC\
canonical terms.
The set of \emph{TTC canonical terms} is inductively defined by the
following rules:
\begin{itemize}
\item
if $t$ is a \CTC\ canonical term, then $t$ is a \TTC\ canonical term;
\item
if $t$ is a \CTC\ canonical term and $t'$ is a \TTC\ canonical term,
then $t \conjc \delay(t')$ is a \TTC\ canonical term.
\end{itemize}
\begin{lemma}
\label{lemma-TTC-can-term}
For all tuplix-closed \TTC\ terms $t$ of sort $\Tuplix$, there exists a
tuplix-closed \TTC\ canonical term $t'$ such that $t = t'$ is derivable
from the axioms of \TTC.
\end{lemma}
\begin{proof}
The proof is straightforward by induction on the structure of $t$, and
in the cases $t \equiv \pabstr{I}(s)$ and $t \equiv \iencap{H}{p}(s)$
(where we can restrict ourselves to tuplix-closed \TTC\ canonical terms
$s$) by induction on the structure of $s$.
The following easy to prove fact is used in the proof for the case
$t \equiv \iencap{H}{p}(s)$:
for all \TTC\ terms $t_1$ of sort $\Tuplix$ and all tuplix-closed \TTC\
terms $t_2$ of sort $\Tuplix$ in which no element of $H$ occurs,
$\iencap{H}{u}(t_1 \conjc t_2) = \iencap{H}{u}(t_1) \conjc t_2$ is
derivable from the axioms of \TTC\ \linebreak[2]
(cf.\ Lemma~5 in~\cite{BPZ07a}).
\qed
\end{proof}
The following is a useful corollary of Lemma~\ref{lemma-TTC-can-term}.
\begin{corollary}
\label{corol-TTC-can-term}
For all tuplix-closed \TTC\ terms $t$ of sort $\Tuplix$, there exists a
tuplix-closed \TTC\ term $t'$ of the form
$\delay^0(t_0) \conjc \ldots \conjc \delay^n(t_n)$,
where $n \geq 0$ and $t_0,\ldots,t_n$ are tuplix-closed \CTC\ canonical 
terms, such that $t = t'$ is derivable from the axioms of \TTC.
\end{corollary}

\section{Financial Product Behaviours}
\label{sect-fpb}

In this section, we formalize the cumulative interest compliant 
conservation requirement proposed by Wesseling and van den Bergh, use
this formalization to introduce the notion of a financial product 
behaviour, and present some properties of financial product behaviours.
We use \TTC\ for this, viewing tuplices as representations of financial 
behaviours.

Here, the signed cancellation meadow $\cD$, which is a parameter of 
\TTC, is confined to the signed meadow of real numbers.
The signed meadow of rational numbers would not serve our purpose as 
will be explained hereafter.

In~\cite{WB00a}, Wesseling and van den Bergh claim that interest
calculations relating to financial products should always be based on
cumulative interests.
By strictly adhering to the use of cumulative interests, the design of
financial products is made symmetric between client and provider and an
implicit bias towards either party can be avoided.
This is the point of departure of their `realistic interest calculation
approach' and the origin of axiom RICA of TTC.
Applying this approach involves a strict separation between transfers
related to a financial product proper and transfers related to its costs
of delivery.
Transfers related to the financial product proper include transfers due
to interests.
Transfers related to the costs of delivery may include clear profit,
general running cost, cost of insurance against non-payment, costs of
marketing and communication, etc.

Having made this separation, Wesseling and van den Bergh formulate a
cumulative interest compliant conservation requirement for financial 
products: the sum of all transfers relating to the product, transposed 
to some point of time (the focal date) by means of cumulative interest 
at the effective interest rate of the product, is zero.
In~\cite{WB00a}, this requirement is presented in the form of an 
equation whose left-hand side and right-hand side are informally 
described.
The equation concerned has two unknowns, to wit a financial behaviour 
and an interest rate.
If a financial behaviour and an interest rate make up a solution of the
equation, then the interest rate is taken for the effective interest 
rate of a financial product that involves the financial behaviour.

The cumulative interest compliant conservation requirement for financial
products is formalized in TTC by the equation
\begin{ldispl}
\iencap{\set{\iact}}{u}(\pabstr{A}(x)) = \emptx\;.
\end{ldispl}
This equation is called the \emph{Wesseling and van den Bergh equation}
or shortly the \emph{W-vdB equation}.
In the following definition, we make use of the W-vdB equation.
The definition is inspired by the perspective mentioned at the end of 
the last paragraph.
Let $t$ be a closed term of sort $\Tuplix$.
Then $t$ represents a \emph{financial product behaviour} if 
\begin{ldispl}
\Exists{u}
 {(\Forall{v}
    {(v > -1 \Implies 
      (\iencap{\set{\iact}}{v}(\pabstr{A}(t)) = \emptx \Iff
       u = v)})}\;.\footnotemark
\end{ldispl}
\footnotetext
{It follows from the decidability of the first-order theory of real 
 numbers with addition, multiplication and order (see~\cite{Tar51a})
 that it is decidable whether a closed term of sort $\Tuplix$
 represents a financial product behaviour.}
We see that the interest rate $v$ for which the equation
$\iencap{\set{\iact}}{v}(\pabstr{A}(t)) = \emptx$
holds must meet the condition that $v > -1$ and the condition that $v$ 
is the unique interest rate meeting the first condition for which the 
equation holds.
These conditions are healthiness conditions: if they are not met, we 
have to do with an implausible financial product behaviour.
Instead of a uniqueness condition on $v$, we could have a condition on
$t$ based on Descartes' rule of signs or one of its relatives 
(see e.g.~\cite{Mes82a}).
Thus, we would have replaced the uniqueness condition on $v$ by a 
condition that is sufficient but not necessary for the uniqueness of 
$v$.
That is, we would have a less general definition. 

Each closed term of sort $\Tuplix$ represents a financial behaviour, but 
not each closed term of sort $\Tuplix$ represents a financial product 
behaviour.
A financial product behaviour can be seen as a financial behaviour for 
which a financial product can be devised that involves the behaviour.
However, a financial product behaviour may also have one or more origins 
different from a financial product.
For example, viewed apart, the financial behaviour that is part of a 
trading behaviour is often a financial product behaviour as well.

The definition of a financial product behaviour given above agrees with
the viewpoint that a financial product entails an agreement under which 
a party gives one or more fixed amounts of money to another party, each 
of them at a fixed date, with the understanding that the former party 
will get back one or more fixed amounts of money, each of them at a 
fixed date (freely cited\linebreak[2] from~\cite{Fei07a}).

Consider a loan of \euro1,000 for which the borrower has to pay back 
\euro2,000 after two years.
The financial behaviour involved in this loan is a financial product 
behaviour according to the definition given above only if the equation
\smash{$-1000 + \frac{2000}{(1 + v)^2} = 0$} has a unique solution 
greater than $-1$.
This equation has a unique solution greater than~$-1$ in the 
signed meadow of real numbers, to wit $\surd2 - 1$, but no solution in 
the signed meadow of rational numbers.
This example shows that there are genuine financial products that 
involve financial behaviours which would not be financial product
behaviours according to the definition given above if interest rates
would be restricted to rational numbers.
Another matter is that in reality financial institution cannot help
but approximate interest rates like $\surd2 - 1$ with some finite 
accuracy.

Let $p$ be a closed term of sort $\Quantity$ and $t$ be a closed term of
sort $\Tuplix$ such that 
$\iencap{\set{\iact}}{p}(\pabstr{A}(t)) = \emptx$.
Then $t$ represents a financial product behaviour and $p$ represents the 
effective interest rate of the underlying financial product.
If that financial product is a financial product of credit type, then 
$\icap{p}(t) = 0$.
However, if that financial product is a financial product of savings 
type, then $\icap{p}(t) > 0$.

Let $p$ and $q$ be closed terms of sort $\Quantity$ and $t$ and $t'$
be closed terms of sort $\Tuplix$ such that 
\smash{$\iencap{\set{\iact}}{q}(\pabstr{A}(t)) = \emptx$}, 
$\icap{q}(t) = 0$, and $\icap{p}(t') > 0$.
Then we say that the financial behaviour $t'$ profits from using the 
financial product underlying $t$ taking the interest rate $p$ into 
account if $\icap{p}(t \conjc t') < \icap{p}(t')$.
In any case, \linebreak[2] we have 
$\icap{p}(t \conjc t') \leq \icap{p}(t) + \icap{p}(t')$.
The important observation is that we may have 
$\icap{p}(t \conjc t') < \icap{p}(t')$.
\begin{proposition}
\label{prop-profit-of-loan}
There exist closed terms $p$ and $q$ of sort $\Quantity$ and closed 
terms $t$ and $t'$ of sort $\Tuplix$ with
$\iencap{\set{\iact}}{q}(\pabstr{A}(t)) = \emptx$, $\icap{q}(t) = 0$, 
and $\icap{p}(t') > 0$ such that $\icap{p}(t \conjc t') < \icap{p}(t')$.
\end{proposition}
\begin{proof}
Take the case where $p$ and $q$ are such that
$\cD \models \frac{1}{1+p} \mul (1+q)^2 \leq \frac{8}{5}$,
$t \equiv b(-5) \conjc \delay^2(b'((1+q)^2 \mul 5))$, and
$t' \equiv a(7) \conjc \delay(a'(-8))$.
We can easily derive that 
$\iencap{\set{\iact}}{q}(\pabstr{A}(t)) = \emptx$, 
$\icap{q}(t) = 0$, and $\icap{p}(t') = 7$.
Moreover, in Example~\ref{example-icap}, we have already derived that
$\icap{p}(t \conjc t') = 2$.
Hence, $\icap{p}(t \conjc t') < \icap{p}(t')$.
\qed
\end{proof}
Proposition~\ref{prop-profit-of-loan} can be read as follows: there
exists an interest rate, a financial product of credit type, and a 
financial behaviour that profits from that financial product if that 
interest rate is taken into account.

\begin{proposition}
\label{prop-fpb-min}
Let $t$ and $t'$ be closed terms of sort $\Tuplix$ such that $t'$ is $t$
with each subterm of the form $a(p)$ replaced by $a(-p)$, and let $q$ be
a closed term of sort $\Quantity$ such that $q \neq -1$.
Then $\iencap{\set{\iact}}{q}(\pabstr{A}(t)) = \emptx$ implies
$\iencap{\set{\iact}}{q}(\pabstr{A}(t')) = \emptx$.
\end{proposition}
\begin{proof}
Assume that $\iencap{\set{\iact}}{q}(\pabstr{A}(t)) = \emptx$.
Then $t \neq \nullx$.
From this and Corollary~\ref{corol-TTC-can-term}, it follows that 
$\pabstr{A}(t)$ is of the form 
$\delay^0(t_0) \conjc \ldots \conjc \delay^n(t_n)$, where 
$t_0,\ldots,t_n$ are of the form $\iact(p)$ or $\emptx$.
For each $i \in \set{0,\ldots,n}$,
let $p_i$ be such that $\iact(p_i) \equiv t_i$ if 
$t_i \not\equiv \emptx$ and $p_i \equiv 0$ if $t_i \equiv \emptx$.
Then 
$\iencap{\set{\iact}}{q}(\pabstr{A}(t)) = 
 \ztest{\sum_{i =0}^n \frac{1}{(1+q)^i} \mul p_i}$ and
$\iencap{\set{\iact}}{q}(\pabstr{A}(t')) = 
 \ztest{\sum_{i =0}^n \frac{1}{(1+q)^i} \mul -p_i}$.
Because $\iencap{\set{\iact}}{q}(\pabstr{A}(t)) = \emptx$, we know that
$\sum_{i =0}^n \frac{1}{(1+q)^i} \mul p_i = 0$.
From this and the fact that  
$\sum_{i =0}^n \frac{1}{(1+q)^i} \mul -p_i = 
 -\sum_{i =0}^n \frac{1}{(1+q)^i} \mul p_i$,
it follows that 
$\sum_{i =0}^n \frac{1}{(1+q)^i} \mul -p_i = 0$.
Hence, $\iencap{\set{\iact}}{q}(\pabstr{A}(t')) = \emptx$.
\qed
\end{proof}
Proposition~\ref{prop-fpb-min} can be read as follows: if we change the 
incoming transfers of a financial product into outgoing transfers and
its outgoing transfers into incoming transfers, then the result is a 
financial product behaviour as well; and the effective interest rates of
the underlying financial products are the same.

Let $t$ and $t'$ be closed terms of sort $\Tuplix$.
Then $t$ is a \emph{time inverse} of $t'$ if, for some natural number 
$n$, there exist closed \CTC\ canonical terms $t_0,\ldots,t_n$ such that
$t = \delay^0(t_0) \conjc \ldots \conjc \delay^n(t_n)$ and
$t' = \delay^0(t_n) \conjc \ldots \conjc \delay^n(t_0)$.
If follows immedi\-ately from the definition that $t$ is a time inverse 
of $t'$ if and only if $t'$ is a time inverse of $t$.
By Corollary~\ref{corol-TTC-can-term}, each closed term of sort 
$\Tuplix$ has a time inverse.
This time inverse is unique up to derivable equality.

\begin{proposition}
\label{prop-fpb-inv}
Let $t$ and $t'$ be closed terms of sort $\Tuplix$ such that $t$ is a
time inverse of $t'$, and let $p$ and $q$ be closed terms of sort 
$\Quantity$ such that $p \neq -1$ and \smash{$q = \frac{-p}{1+p}$}.
Then $\iencap{\set{\iact}}{p}(\pabstr{A}(t)) = \emptx$ implies
$\iencap{\set{\iact}}{q}(\pabstr{A}(t')) = \emptx$.
\end{proposition}
\begin{proof}
Assume that $\iencap{\set{\iact}}{q}(\pabstr{A}(t)) = \emptx$.
Then $t \neq \nullx$.
From this and Corollary~\ref{corol-TTC-can-term}, it follows that 
$\pabstr{A}(t)$ is of the form 
$\delay^0(t_0) \conjc \ldots \conjc \delay^n(t_n)$, where 
$t_0,\ldots,t_n$ are of the form $\iact(p)$ or $\emptx$.
For each $i \in \set{0,\ldots,n}$,
let $p_i$ be such that $\iact(p_i) \equiv t_i$ if 
$t_i \not\equiv \emptx$ and $p_i \equiv 0$ if $t_i \equiv \emptx$.
Then 
$\iencap{\set{\iact}}{p}(\pabstr{A}(t)) = 
 \ztest{\sum_{i =0}^n \frac{1}{(1+p)^i} \mul p_i}$ and
$\iencap{\set{\iact}}{q}(\pabstr{A}(t')) = 
 \ztest{\sum_{i =0}^n \frac{1}{(1+q)^{n-i}} \mul p_i}$.
Because $\iencap{\set{\iact}}{p}(\pabstr{A}(t)) = \emptx$, we know that
$\sum_{i =0}^n \frac{1}{(1+p)^i} \mul p_i = 0$.
From this and the fact that  
$\sum_{i =0}^n \frac{1}{(1+q)^{n-i}} \mul p_i =
 (1 + p)^n \mul \sum_{i =0}^n \frac{1}{(1+p)^i} \mul p_i$,
it follows that 
$\sum_{i =0}^n \frac{1}{(1+q)^{n-i}} \mul p_i = 0$.
Hence, $\iencap{\set{\iact}}{q}(\pabstr{A}(t')) = \emptx$.
\qed
\end{proof}
Proposition~\ref{prop-fpb-inv} can be read as follows: if we reverse the 
order of time in which the transfers of a financial product behaviour 
take place, then the result is a financial product behaviour as well; 
and if the effective interest rate of the former financial products is 
$p$ then the effective interest rate of the latter financial products 
is~$\frac{-p}{1+p}$.

\section{Standard Model of \TTC}
\label{sect-model}

In this section, we construct the standard model of \TTC.
The standard model of \CTC\ presented in~\cite{BPZ07a} lies at the root
of this model.
However, the use of partial functions is circumvented.

We write $\cD$ for the domain of the signed cancellation meadow $\cD$,
and we write $\op$, where $\op$ is a constant or operator from the
signature of signed cancellation meadows, for the interpretation of
$\op$ in $\cD$.
To prevent confusion with the constants from the signature of meadows, 
we write $\natzero$ and $\natone$ for the identity elements of addition 
and multiplication on natural numbers.

We define the set $\TE$ of \emph{tuplix elements}, the set $\UT$ of
\emph{untimed tuplices}, and the set $\TT$ of \emph{timed tuplices} as
follows:
\begin{ldispl}
\begin{aeqns}
\TE & = &
\Union{A' \subseteq \Attr} (\mapof{A'}{\cD})\;,
\eqnsep
\UT & = &
\set{U \subseteq \TE \where \card(U) \leq \natone}\;,
\eqnsep
\TT & = &
\set{\funct{T}{\Nat}{\UT} \where
     \Forall{i \in \Nat}{(\card(T(i)) = \natzero)} \Or
     \Forall{i \in \Nat}{(\card(T(i)) = \natone)}}\;.
\end{aeqns}
\end{ldispl}
In the definition of the standard model of \TTC, the auxiliary set
$\TT^-$ defined by
\begin{ldispl}
\TT^- =
 \set{T \in \TT \where \Forall{i \in \Nat}{(\card(T(i)) = \natone)}}
\end{ldispl}
is used as well.
We write $\elem(U)$, where $U \in \UT$, for the unique element
$f \in \TE$ such that $f \in U$ if $\card(U) = \natone$, and an
arbitrary $f \in \TE$ otherwise.

The \emph{standard model} of \TTC, written $\cM(\cD,A)$, is the
expansion of the signed cancellation meadow $\cD$ with
\begin{itemize}
\item
for the sort $\Tuplix$, the set $\TT$;
\item
for each additional constant $\const{\op_0}{\Tuplix}$ of \TTC, the
element $\opi_0 \in \TT$ defined in Table~\ref{interpretation-TTC};
\item
for each additional operator
$\funct{\op_n}{S_1 \x \ldots \x S_n}{S_{n+1}}$ of \TTC, the operation
$\funct{\opi_n}{D_1 \x \ldots D_n}{D_{n+1}}$, where $D_i = \TT$ if
$S_i \equiv \Tuplix$ and $D_i = \cD$ if $S_i \equiv \Quantity$, defined
in Table~\ref{interpretation-TTC}.%
\footnote
{We write $\emptymap$ for the empty function and $\maplet{e}{e'}$ for
 the function $f$ with $\dom(f) = \set{e}$ such that $f(e) = e'$.}
\end{itemize}
\begin{table}[!t]
\caption{Interpretation of constants and operators of \TTC}
\label{interpretation-TTC}
\begin{eqntbl}
\begin{caeqns}
\memptx(i) & = & \set{\emptymap}
\\
\mnullx(i) & = & \emptyset
\\
\mentry{a}{d}(i) & = &
\left\{\begin{col}
       \set{\maplet{a}{d}} \\ \set{\emptymap}
       \end{col}
\right.
       &
       \begin{col}
       \mif i = \natzero   \\ \mother
       \end{col}
\\
\mztest{d}(i) & = &
\left\{\begin{col}
       \set{\emptymap} \\ \emptyset
       \end{col}
\right.
       &
       \begin{col}
       \mif d = \mzero  \\ \mother
       \end{col}
\\
(T \mconjc T')(i) & = &
\set{f \aconjc f' \where f \in T(i) \And f' \in T'(i)}
\\
\mdelay(T)(i) & = &
\left\{\begin{col}
       T(i - \natone) \\
       \set{\emptymap} \\
       \emptyset
       \end{col}
\right.
       &
       \begin{col}
       \mif i > \natzero \And T(i) \neq \emptyset \\
       \mif i = \natzero \And T(i) \neq \emptyset \\
       \mother
       \end{col}
\\
\mpabstr{I}(T)(i) & = &
\set{\apabstr{I}(f) \where f \in T(i)}
\\
\miencap{H}{d}(T)(i) & = &
{\set{\aclear{H}(f) \where
      f \in T(i) \And \Forall{a \in H}{(\atotal{a}{d}(T) = 0)}}}
\eqnsep
\micap{d}(T) & = &
\left\{\begin{col}
       \aicap{d}(T) \\
       -1
       \end{col}
\right.
       &
       \begin{col}
       \mif \Exists{i \geq \natzero}{(T(i) \neq \emptyset)} \\
       \mother
       \end{col}
\end{caeqns}
\end{eqntbl}
\end{table}
In Table~\ref{interpretation-TTC}, the following auxiliary functions are
used:
\begin{itemize}
\item
the function $\funct{{\aconjc}}{\TE \x \TE}{\TE}$ defined by
\begin{itemize}
\item
$\dom(f \aconjc f') = \dom(f) \union \dom(f')$;
\item
for each $a \in \dom(f \aconjc f')$:
% !!
\begin{ldispl}
\hsp{-1.65}
\begin{caeqns}
(f \aconjc f')(a) & = &
\left\{\begin{col}
       f(a) \madd f'(a) \\
       f(a) \\
       f'(a)
       \end{col}
\right.
       &
       \begin{col}
       \mif a \in \dom(f) \inter \dom(f') \\
       \mif a \in \dom(f) \diff \dom(f') \\
       \mif a \in \dom(f') \diff \dom(f)\;;
       \end{col}
\end{caeqns}
\end{ldispl}
\end{itemize}
\pagebreak[2]
\item
for each $I \subseteq \Attr$,
the function $\funct{\apabstr{I}}{\TE}{\TE}$ defined by
\begin{itemize}
\item
$\dom(\apabstr{I}(f)) =
 (\dom(f) \diff I) \union
 \set{\iact \where \dom(f) \inter I \neq \emptyset}$;
\item
for each $a \in \dom(\apabstr{I}(f))$:
% !!
\begin{ldispl}
\hsp{-1.65}
\begin{caeqns}
\apabstr{I}(f)(a) & = &
\left\{\begin{col}
       f(a) \\
       \sum_{a' \in I} f(a')
       \end{col}
\right.
       &
       \begin{col}
       \mif a \neq \iact \\
       \mif a =    \iact\;;
       \end{col}
\end{caeqns}
\end{ldispl}
\end{itemize}
\item
for each $H \subseteq \Attr$,
the function $\funct{\aclear{H}}{\TE}{\TE}$ defined by
\begin{itemize}
\item
$\dom(\aclear{H}(f)) = \dom(f) \diff H$;
\item
for each $a \in \dom(\aclear{H}(f))$:
% !!
\begin{ldispl}
\hsp{-1.65}
\begin{caeqns}
\aclear{H}(f)(a) & = & f(a)\;;
\end{caeqns}
\end{ldispl}
% !!
\vspace*{.75ex}\par
\end{itemize}
\item
for each $a \in \Attr$,
the function $\funct{\atotal{a}{}}{\cD \x \TT}{\cD}$ defined by
\begin{ldispl}
\begin{caeqns}
\atotal{a}{d}(T) & = &
{\displaystyle \sum_{i \;\mathrm{s.t.}\; a \in \dom(\elem(T(i)))}}
 (1 \madd d)^i \mmul \elem(T(i))(a)\;;
\end{caeqns}
\end{ldispl}
\item
the function $\funct{\aicap{}}{\cD \x \TT^-}{\cD}$ recursively defined 
by
\begin{ldispl}
\begin{caeqns}
\aicap{u}(T) & = &
\left\{\begin{col}
       \max(q_0(T),\mzero) \\
       \max(q_0(T) \madd
            {\displaystyle \frac{1}{1 \madd u}} \mmul 
            \aicap{u}(\shift(T)),\mzero)
       \end{col}
\right.
       \begin{col}
       \mif \Forall{i > \natzero}{(T(i) =    \set{\emptymap})} \\
       \mif \Exists{i > \natzero}{(T(i) \neq \set{\emptymap})}\;,
       \end{col}
\end{caeqns}
\end{ldispl}
where:
\begin{itemize}
\item
$\funct{\shift}{\TT^-}{\TT^-}$ is defined by
$\shift(T)(i) = T(i + \natone)$ for all $i \in \Nat$;
\item
$\funct{q_0}{\TT^-}{\cD}$ is defined by
$q_0(T) = \sum_{a \in \dom(\elem(T(\natzero)))} \elem(T(\natzero))(a)$.
\end{itemize}
\end{itemize}

It is easy to establish the following soundness result: for all terms
$t$ and $t'$ of sort $\Tuplix$, $t = t'$ is derivable from the axioms of
\TTC\ and the proof rule DE only if $\cM(\cD,A) \models t = t'$.
We also have a completeness result.
\begin{theorem}
\label{theorem-completeness}
For all closed terms $t$ and $t'$ of sort $\Tuplix$,
$\cM(\cD,A) \models t = t'$ only if $t = t'$ is derivable from the
axioms of \TTC\ and the proof rule \textup{DE}.
\end{theorem}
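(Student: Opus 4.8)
The plan is to reduce the model-theoretic equality $\cM(\cD,A) \models t = t'$ to syntactic derivability by means of the canonical form machinery already available. The central tool is Proposition~\ref{prop-TTC-can-term}: every tuplix-closed (in particular, every closed) \TTC\ term is provably equal to a \TTC\ canonical term. So I would first rewrite both $t$ and $t'$ into canonical terms $\hat t$ and $\hat t'$, using the axioms of \TTC. By soundness, $\cM(\cD,A) \models t = \hat t$ and $\cM(\cD,A) \models t' = \hat t'$, hence $\cM(\cD,A) \models \hat t = \hat t'$. It then suffices to prove the implication for canonical terms only: if two closed \TTC\ canonical terms are equal in the standard model, then they are provably equal. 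The whole problem is thereby localized to the rigid syntactic shape $\ztest{p_0} \conjc a_1(p_1) \conjc \ldots \conjc a_k(p_k)$ layered through nested delays.

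\medskip\noindent\textbf{Normalizing the canonical form.} Before comparing, I would push each canonical term into a genuinely unique representative by applying the proof rule \textup{DE} together with axioms T5--T10. Concretely, within each time slice I would (i) collapse repeated transfer actions $a(u) \conjc a(v)$ into $a(u+v)$ via T5, so that each action occurs at most once per slice; (ii) merge all zero tests into a single $\ztest{p_0}$ via T9, and use T6 to replace it by $\ztest{p_0 / p_0}$; and (iii) use \textup{DE} to evaluate each quantity coefficient $p_i$ to its value in $\cD$. The upshot is a normal form in which, slice by slice, the data consist of a ``live/blocked'' bit (whether the leading zero test evaluates to $0$ or $1$ in $\cD$) and, when live, a finite partial map from transfer actions to quantities. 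This is exactly the data that a timed tuplix $T \in \TT$ records: $T(i)$ is either $\emptyset$ (blocked) or a singleton $\set{f}$ whose function $f$ assigns quantities to the finitely many actions present in slice $i$. So the normal form of a closed canonical term is a syntactic transcription of its interpretation.

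\medskip\noindent\textbf{From equal interpretations to identical normal forms.} The key step is to read off the interpretation of a normalized canonical term directly from the interpretation table. A blocked slice forces $T(i) = \emptyset$ for all $i$ (by T4, T8, and the $\card(T(i)) = \natzero$ clause in the definition of $\TT$); otherwise each slice $i$ yields the singleton $\set{f_i}$ with $f_i(a)$ equal to the evaluated coefficient of $a(\cdot)$ in that slice. Now suppose $\cM(\cD,A) \models \hat t = \hat t'$. If one side is blocked, the model value is the everywhere-empty timed tuplix, which forces the other side to be blocked too, and both are provably equal to $\nullx$ (by T4/T8). If both are live, then for every slice $i$ and every action $a$ the coefficients agree \emph{as elements of $\cD$}, i.e. $\cD \models p^i_a = q^i_a$. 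Lifting these quantity equalities back through \textup{DE} (for the zero-test part) and through the congruence induced by the \TTC\ axioms (for the $a(\cdot)$ part, using that $\cD \models p = q$ gives provable equality of $a(p)$ and $a(q)$ via T10 and \textup{DE}), I obtain $\hat t = \hat t'$ slice by slice, hence overall. Combined with the reductions $t = \hat t$ and $t' = \hat t'$, this yields $t = t'$.

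\medskip\noindent\textbf{Main obstacle.} The delicate point is not the equational bookkeeping but establishing that the normal form is a \emph{faithful} and \emph{injective} encoding of the interpretation --- that two normalized canonical terms with the same value in $\cM(\cD,A)$ must coincide coefficient-by-coefficient. This requires checking that the interpretation table assigns distinct model values to canonical terms differing in any slice's live/blocked bit or in any single action coefficient; in particular one must verify that $a(p)$ and $a(q)$ have the same interpretation exactly when $\cD \models p = q$, so that the passage from semantic equality back to quantity equalities is legitimate. Because \textup{DE} only lifts equalities that hold in $\cD$, the argument hinges on $\cD$ being a non-trivial signed cancellation meadow, which guarantees $\ztest{p} $ distinguishes $p = 0$ from $p \neq 0$ and that no spurious identifications of coefficients occur. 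Once this faithfulness is in place, the remainder is a routine induction on the nesting depth of $\delay$ in the canonical form.
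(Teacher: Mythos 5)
Your proposal is correct and follows essentially the same route as the paper: reduce to closed \TTC\ canonical terms via Proposition~\ref{prop-TTC-can-term}, then argue by induction on the structure of the canonical form (i.e., slice by slice through the nested delays), lifting the semantic agreement of coefficients back to derivable equalities via \textup{DE} and T10. The only difference is that you spell out the per-slice step explicitly, whereas the paper delegates it to the completeness theorem for \CTC\ from~\cite{BPZ07a}.
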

\begin{proof}
By Lemma~\ref{lemma-TTC-can-term}, it is sufficient to show that,
for all closed \TTC\ canonical terms $t$ and $t'$,
$\cM(\cD,A) \models t = t'$ only if $t = t'$ is derivable from the
axioms of \TTC\ and the proof rule \textup{DE}.
This is easy to prove by induction on the structure of $t$ using
Theorem~1 from~\cite{BPZ07a}.
\qed
\end{proof}

\section{Concluding Remarks}
\label{sect-conclusions}

We have developed a timed extension of the core of tuplix calculus in 
which financial behaviours are considered at a level of detail where the 
time slices in which actions are performed matter, but not their order 
within the time slices.
This makes it suited for the description and analysis of financial
products: financial products exhibit financial behaviours where the day,
week or month in which actions are performed and the amounts of
money are transferred in doing so are relevant, but not their order
within the periods concerned.

We have formalized the cumulative interest compliant conservation
requirement for financial products proposed by Wesseling and van den 
Bergh by an equation in the timed tuplix calculus developed.
Thus, a formalization of the starting-point of the material on the
mathematics of finance presented in~\cite{WB00a} has been achieved.
Moreover, we have used this formalization to introduce the notion of a 
financial product behaviour, and have presented some properties of 
financial product behaviours.
The timed tuplix calculus appears to be a reasonable setting for further
work in this area.

In~\cite{BPZ07a}, the core of tuplix calculus is among other things 
extended with a binary alternative composition operator and a 
variable-binding generalized alternative composition operator for each 
variable of sort $\Quantity$.
The latter operators have proved to be convenient in modular budget 
design.
Extending timed tuplix calculus with these operators would allow for
non-deterministic financial behaviours to be described.
However, in the presence of non-deterministic financial behaviours it 
would be less easy to acquire an intuitive understanding of what the 
implicit capital of a financial behaviour tells us.
Moreover, comparison of the implicit capitals of different financial 
behaviours, like in Section~\ref{sect-fpb}, appears to make little sense
in the case of non-deterministic financial behaviours.

Like Wesseling and van den Bergh, we consider only financial products of
which the interest rate is not dependent on changes in the financial
market.
If the interest rate of a financial product is made dependent on changes
in the financial market, then the expressiveness of the timed tuplix
calculus is insufficient.
In this more dynamic case, a version of discrete time process 
algebra~\cite{BB95a} looks to be a reasonable setting for the 
formalization of an adapted version of the cumulative interest compliant 
conservation requirement.

We remark that we do not have to abandon discrete time if interest is 
continuously instead of discretely compounded because of the commonly 
known fact that an interest rate $p$ with continuous compounding is 
equivalent to an interest rate $\ln(1 + p)$ with discrete compounding.

We mention that the cumulative interest compliant conservation 
requirement for financial products has been formulated by Wesseling and 
van den Bergh under the influence of basic ideas on the mathematics of 
finance presented in~\cite{CCF90a}.

The work to which ours seems to be most closely related is the work on 
MLFi (Modeling Language for Finance)~\cite{Lex05a}.
MLFi is a language to describe financial products in a mathematically 
precise, compositional way.
A distinctive feature of MLFi is that the descriptions of financial 
products can be analyzed, manipulated, and translated in many ways.
Therefore, MLFi is considered to be the basis of an approach to the
application of various formal methods in matters concerning financial 
products.
TTC could find a place among these formal methods.

\bibliographystyle{splncs03}
\bibliography{TC}

\begin{comment}
The restriction of a model of \CTC\ to the sort $\Tuplix$, the constant
$\emptx$ and the operator $\conjc$ is a commutative monoid with identity
element $\emptx$.

As mentioned in Example~\ref{example-icap}, there are many different $p$
and $q$ for which
$\cD \models \frac{1}{1+p} \mul (1+q)^2 \leq \frac{8}{5}$.

A wanted financial behaviour may be the buying and selling of an
apartment.

Using a financial product may turn unaffordable financial behaviour
into affordable financial behaviour.

It is intuitively preferable to assume that the alphabets of a wanted
financial behaviour and the financial behaviour of the used 
financial product are disjoint.
\end{comment}

\end{document}